\newcommand{\ket}[1]{{\left\vert{#1}\right\rangle}}
\newcommand{\braket}[2]{{\langle {#1}\!\mid\!{#2} \rangle}}
\newtheorem{theorem}{Theorem}
\newtheorem{definition}{Definition}
\newtheorem{remark}{Remark}
\newtheorem{property}{Property}
\newenvironment{proof}%
{\medskip
\noindent {\it Proof.}
}{\Endproof}
\newcommand{\Endproof}{\hfill$\Box$\\}
\begin{document}

\title{Multiqudit quantum hashing and its implementation based on orbital angular momentum encoding}

\author{D O Akat’ev$^{1}$, A V Vasiliev$^{1,2}$, N M Shafeev$^{2}$, F M Ablayev$^{1,2}$, and A A Kalachev$^{1,2}$}

\address{$^1$ Zavoisky Physical-Technical Institute, FRC Kazan Scientific Center of RAS, Kazan, Russian Federation}
\address{$^2$ Kazan Federal University, Kazan, Russian Federation}
\ead{akatevdmitrijj@gmail.com, vav.kpfu@gmail.com}
\vspace{10pt}
\begin{indented}
\item[]October 2022
\end{indented}

%\date{}

%\institute{Kazan Federal University,
%Kazan, Russian Federation \and Institute for Informatics% of Tatarstan Academy of Sciences
%, Kazan, Russian Federation}

%\maketitle

\begin{abstract}%should not normally exceed 200 words
A new version of quantum hashing technique is developed wherein a quantum hash is constructed as
a sequence of single-photon high-dimensional states (qudits). A proof-of-principle implementation of the high-dimensional quantum hashing protocol using orbital-angular momentum
encoding of single photons is implemented. It is shown that the number of qudits decreases with increase of their dimension for an optimal ratio between collision probability and
decoding probability of the hash. Thus, increasing dimension of information carriers makes quantum hashing with single photons more efficient.
\end{abstract}
\noindent{\it Keywords}: single-photon states, orbital angular momentum, quantum hashing, SPDC

\section{Introduction}

Hashing algorithms today have become essential in cybersecurity, cryptography, data-intensive research, etc., as they can reliably inform us whether two files are identical without opening and comparing them. As an important part of cryptography, a hash function compresses a message of any length into a digest of fixed length and it is the key technology of verification of message integrity, digital signatures, fingerprinting and other cryptographic applications \cite{Paar-Pelzl:2009:Understanding-Cryptography,Katz-Lindell:2014:Modern-Cryptography}. Moreover, a universal hash function is an important part of the privacy amplification process of the quantum key distribution \cite{476316}. For such applications, a good hashing algorithm should satisfy two main properties: one-way property and collision resistance. The first means that restoring an input from its hash should be a computationally hard problem. The second property means that the situation when two different inputs have the same hash (such a situation is called a collision) is hard to find. 

Recently, a promising generalization of the cryptographic hashing concept on the quantum domain, which is called quantum hashing, has been suggested and developed \cite{Ablayev-Vasiliev:2014:Crypto-Q-Hashing,Ablayev-et-al:2016:Quantum-Fingerprinting-Hashing,Ablayev-Ablayev-Vasiliev:2016:Balanced-Quantum-Hashing,Our_qubits_hash}. In this case, the hash function encodes a classical input state into a quantum state so that to optimize the trade-off between one-way property and collision resistance. In particular, in \cite{Our_qubits_hash}, it was suggested to construct a quantum hash via a sequence of single-photon qubits, and a proof-of-principle experiment using single photons with orbital angular momentum (OAM) encoding was implemented. In the present paper, we further develop this approach both theoretically and experimentally and construct a quantum hash as a sequence of single-photon high-dimensional states (qudits). We show that the use of high-dimensional states increases the collision resistance of hashing protocols and enhances the resistance against extraction of information about the classical input.

\section{Theory}

\subsection{Preliminaries}

%\paragraph{Quantum hashing via small-biased sets.}

In \cite{Ablayev-Vasiliev:2014:Crypto-Q-Hashing} we have proposed a
cryptographic quantum hash function and later in
\cite{Vasiliev:2016:Quantum-Hashing-for-Groups} provided its
generalized version for arbitrary finite abelian groups based on the
notion of $\varepsilon$-biased sets. 
%Note that $\varepsilon$-biased
%sets are generally defined for arbitrary finite groups
%\cite{Chen-Moore-Russell:2013:Small-Bias-Sets-for-Nonabelian-Groups},
%but here we restrict ourselves to $\mathbb Z_q$. 
Here we consider its version for a cyclic group $\mathbb Z_q$.
In this case, for a set  
$S\subseteq \mathbb Z_q$ we can define its \emph{bias} with respect to $x\in \mathbb Z_q$ as following:
\begin{equation}
bias(S,x)=\frac{1}{|S|}\left|\sum_{s\in S}e^{2\pi sx/q}\right|,
\end{equation}
and the set $S$ is called $\varepsilon$-biased if for any $x\neq0$ $bias(S,x)\le\varepsilon$.
% \begin{equation}
% \frac{1}{|S|}\left|\sum_{s\in S}e^{2\pi sx/q}\right|\le\varepsilon.
% \end{equation}

These sets are especially interesting when $|S|\ll
|\mathbb Z_q|$ (as $S=\mathbb Z_q$ is obviously 0-biased). In their
seminal paper \cite{Naor-Naor:1990:Small-Bias-Spaces} Naor and Naor
defined these small-biased sets, gave the first explicit
constructions of such sets, and demonstrated the power of
small-biased sets for several applications. 
Note that $\varepsilon$-biased sets of size $O(\log q/\varepsilon^2)$ exist as proved in
\cite{Alon-Roichman:1994:Random-Cayley-graphs}.

%\paragraph{Quantum Hashing.}

In \cite{Ablayev-Vasiliev:2014:Crypto-Q-Hashing} we have introduced the notion of quantum hashing and its main properties. Later in \cite{Ablayev-Ablayev-Vasiliev:2016:Balanced-Quantum-Hashing} we have considered the trade-off and balancing between two main properties of quantum hashing, and proposed a more general definition of the quantum $(\delta,\varepsilon)$-resistant hash function. Here we recall it in the concise manner and refer for details to \cite{Ablayev-Ablayev-Vasiliev:2016:Balanced-Quantum-Hashing}.

\begin{definition}\label{QHashing_definition}
Let $\delta\in(0,1]$ and $\varepsilon\in[0,1)$.  We call a function
 $ \psi : {\mathbb X} \to {\mathcal H}^K
 $
a quantum $(\delta,\varepsilon)$-resistant hash function if it has two main properties:
\begin{enumerate}
    \item $\delta$-one-wayness, i.e.
    $$\frac{K}{\left|\mathbb X\right|}\leq \delta,$$
    \item $\varepsilon$-collision-resistance, i.e. for any pair $x_1,x_2$ of different inputs
$$
\left|\braket{\psi(x_1)}{\psi(x_2)}\right| \le \varepsilon.
 $$   
\end{enumerate}

\end{definition}

In other words a quantum function $\psi$ encodes an input $x\in \mathbb X$ into the quantum state $\ket{\psi(x)}$ of dimension $K$. The properties of such a function include resistance to inversion (known as ``one-way property'' or ``preimage resistance''), which makes it unlikely to ``extract'' encoded information out of the quantum state, and resistance to quantum collisions, which means that quantum images for different inputs can be distinguished with high probability.

Note that the measure of collision resistance (denoted above by $\varepsilon$) is not the probability of quantum collisions. The probability of collisions follows from the particular comparison procedure that we use. It can be the well-known SWAP-test \cite{Buhrman:2001:Fingerprinting}, REVERSE-test \cite{Ablayev-Ablayev:2015:Quantum-Hashing-LPL} or simply a result of projection of $\ket{\psi(x_1)}$ onto $\ket{\psi(x_2)}$. In the latter case the probability of collisions would be described by the fidelity between $\ket{\psi(x_1)}$ and $\ket{\psi(x_2)}$, and thus bounded by $\varepsilon^2$.

% Note that the collision resistance means near-orthogonality %($\varepsilon$-orthogonality) !!!!!!
% of quantum
% states $\ket{\psi(x_1)}$ and $\ket{\psi(x_2)}$. It is well-known that orthogonality of quantum states
% provides their distinguishability. 
% In the context of quantum functions near-orthogonality
% means high collision resistance. 
% There is a known lower bound by Buhrman et al. \cite{Buhrman:2001:Fingerprinting} for the size of
% the space to accommodate the set of pairwise-distinguishable states: to construct a set of $\left|\mathbb X\right|$ quantum states with
% pairwise inner products below $\varepsilon$ the dimension of the state space need to be at least $\Omega(\log{\left|\mathbb X\right|}/\varepsilon)$. 
% %This implies that an  $\varepsilon$-collision-resistant quantum function requires at least $s=\Omega(\log\log|{\mathbb X}|-\log{\varepsilon}))$ qubits. The similar lower
% %bound of $\log\log{K} - c(\varepsilon)$ was proved by a different method in
% %\cite{Ablayev-Ablayev:2015:e-universal-Quantum-Hashing}.
% Thus
% %, from the Property \ref{preimage-bound} 
% it is follows that an $\varepsilon$-collision-resistant quantum function can be $\delta$-one-way for $\delta=\Omega\left(\log|{\mathbb X}|/
% (\varepsilon|{\mathbb X}|\right)$.

\subsection{Multiqudit Quantum Hashing}\label{quantum-hash-functions}

Here we define a new version of the quantum hashing technique for a cyclic group, i.e. we consider $\mathbb X=\mathbb Z_q$ and $|\mathbb X|=q$. It is based on small-biased sets and high-dimensional states (qudits).
But first we note the following equivalence between $\varepsilon$-biased sets.
\begin{property}\label{E-Biased-Sets-equivalence}
Let $S=\{s_1, \ldots, s_{d}\}$ and $S'=\{0, (s_{2}-s_{1}), \ldots, (s_{d}-s_{1})\}$. Then for any $x\in \mathbb Z_q$ $bias(S,x)=bias(S',x)$, i.e. the set $S$ is equivalent (in terms of its bias) to $S'$.
\end{property}
\begin{proof}
The proof of this statement is based on the following considerations:
\[\fl \frac{1}{d}\left|\sum_{k=1}^{d}e^{2\pi s_{k}x/q}\right|= \frac{1}{d}\left|e^{2\pi s_{1}x/q}\right|\left|\sum_{k=1}^{d}e^{2\pi (s_{k}-s_{1})x/q}\right|=\\
\frac{1}{d}\left|\sum_{k=1}^{d}e^{2\pi (s_{k}-s_{1})x/q}\right|,
\]
therefore
\begin{eqnarray*}
\fl bias(S,x)=\frac{1}{d}\left|\sum_{k=1}^{d}e^{2\pi s_{k}x/q}\right|
=\frac{1}{d}\left|\sum_{k=1}^{d}e^{2\pi(s_{k}-s_{1})x/q}\right|=
bias(S',x).
\end{eqnarray*}
\end{proof}
Now let $S_1, S_2, \ldots, S_m \subset\mathbb Z_q$ be the
$\varepsilon$-biased subsets of $\mathbb Z_q$, and we denote
$S_j=\{s_{j,1}, \ldots, s_{j,d}\}$ for $j=1, \ldots, m$.
By the Property \ref{E-Biased-Sets-equivalence} without loss of generality we may consider all $s_{j,1}$ to be equal 0. In other words for all $j=1, \ldots, m$ it holds that
$$
\max\limits_{x\neq 0} \frac{1}{d}\Big|1+e^{i\frac{2\pi s_{j,2}x}{q}}+\ldots+ e^{i\frac{2\pi s_{j,d}x}{q}}\Big|\leq \varepsilon.
$$
Then for $x\in\mathbb Z_q$ we define a multiqudit quantum hash
function in the following way:
% The complexity of computing quantum hash function $\psi_{H_S}$ in the QOBDD model is given by the following Theorem.
% \begin{theorem}\label{homom}\label{qbp_h} Quantum $(\delta,\varepsilon)$-hash function  (\ref{qhf-bs})
% \[ \psi_{H_S} : {\mathbb F}_q \to ({\cal H}^2)^{\otimes s}  \]
% %that satisfies Property \ref{v2016pract}
% can be computed by   quantum  OBDD  $Q$ composed from $s=O(\log{\log q})$ qubits.% in $\log{q}$ steps.
% \end{theorem}
% {\em Proof.} The quantum function $\psi_{H_S}$ (\ref{qhf-bs}) maps an input $x\in \mathbb F_q$ to a  quantum state (\ref{qs-bs})
\begin{eqnarray}
\label{single-qudit-quantum-hashing}
  \ket{\psi_j(x)} = \frac{1}{\sqrt{d}}(\ket{\ell_1} + e^{i 2\pi s_{j,2}x/q}\ket{\ell_2}+\ldots+e^{i2\pi s_{j,d}x/q}\ket{\ell_d}),
  %\frac{1}{\sqrt{d}}\sum_{k=0}^{d-1}e^{2\pi s_{j,k}x/q}\ket{k},
\end{eqnarray}
\begin{equation}\label{multiqudit-quantum-hashing}
  \ket{\psi(x)} = \ket{\psi_1(x)} \otimes \cdots \otimes
  \ket{\psi_m(x)},
\end{equation}
%Note that $\ket{\psi_j(x)}$ exactly corresponds to the single-photon multidimensional states with an orbital angular momentum encoding mentioned in the previous section.
where $\ket{\ell_k}$ are the basis states ($k=1\ldots d$, $d$ is the dimension of the qudit state space), $q$ is the size of the input state space, $x\in \{0, 1, \ldots, q-1\}$ is a classical input that is encoded by the relative phase of $m$ qudit states, $s_{i,k}$ are numeric parameters (elements of the $\varepsilon$-biased sets) of the quantum hash function that provide its collision resistance. The main idea of the collision resistance property is to provide minimum fidelity between different quantum hashes (quantum hash function images) with the minimal possible number of quantum information carriers. Furthermore, reaching reasonable balance between collision resistance property and one-way property for a quantum hash function is also an important task.

Note that the formula for $\ket{\psi(x)}$ gives the classical-quantum function that
transforms a classical input into the quantum state composed of $m$
qudits ($d$-dimensional systems). The same state can be constructed
with appropriate number of 2-dimensional systems (qubits), however
this would imply creating entangled states, which are harder to
create and maintain.

\begin{theorem}
The classical-quantum function $\psi: \mathbb Z_q \to {\cal H}^{d^m}$ given by Eqs. \ref{single-qudit-quantum-hashing}, and \ref{multiqudit-quantum-hashing} is a $\left(\frac{d^m}{q},\varepsilon^m\right)$-resistant quantum hash
function.
\end{theorem}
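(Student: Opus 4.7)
The plan is to verify the two defining properties of a $(\delta,\varepsilon)$-resistant quantum hash function directly from Definition \ref{QHashing_definition}. The first property, $\delta$-one-wayness with $\delta=d^m/q$, is immediate: the codomain is $\mathcal H^{d^m}$, so $K=d^m$, while $|\mathbb X|=|\mathbb Z_q|=q$, and the required inequality $K/|\mathbb X|\le d^m/q$ holds with equality.

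The substantive step is to prove $\varepsilon^m$-collision resistance. My approach is to exploit the tensor-product structure of $\ket{\psi(x)}$ given by Eq.~\ref{multiqudit-quantum-hashing}: for any $x_1\neq x_2$ in $\mathbb Z_q$, the inner product factorises as
\begin{equation*}
\braket{\psi(x_1)}{\psi(x_2)} = \prod_{j=1}^{m}\braket{\psi_j(x_1)}{\psi_j(x_2)}.
\end{equation*}
Using orthonormality of the basis $\{\ket{\ell_k}\}$ and Eq.~\ref{single-qudit-quantum-hashing}, each single-qudit factor evaluates to
\begin{equation*}
\braket{\psi_j(x_1)}{\psi_j(x_2)} = \frac{1}{d}\sum_{k=1}^{d} e^{i 2\pi s_{j,k}(x_2-x_1)/q},
\end{equation*}
where I have used the normalisation $s_{j,1}=0$ established via Property~\ref{E-Biased-Sets-equivalence}. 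Setting $y=x_2-x_1$, which is nonzero in $\mathbb Z_q$ since $x_1\neq x_2$, the modulus of this factor is exactly $bias(S_j,y)$, which by hypothesis is at most $\varepsilon$.

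Combining the $m$ factors via the triangle inequality applied termwise gives $|\braket{\psi(x_1)}{\psi(x_2)}|\le\varepsilon^m$, which is the required bound. I do not anticipate a genuine obstacle here; the only point that requires care is ensuring that the normalisation $s_{j,1}=0$ legitimately lets us write the single-qudit overlap as a bias of $S_j$ evaluated at the nonzero element $x_2-x_1$, and this is precisely what Property~\ref{E-Biased-Sets-equivalence} provides.
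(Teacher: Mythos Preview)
Your proposal is correct and follows essentially the same route as the paper: verify $\delta$-one-wayness by reading off $K=d^m$ and $|\mathbb X|=q$, then factor the inner product over the tensor components, recognise each single-qudit overlap as $bias(S_j,x_2-x_1)$, and invoke the $\varepsilon$-biased hypothesis on each factor. The only slip is terminological: what you invoke to combine the $m$ factors is the multiplicativity of the modulus ($|\prod z_j|=\prod|z_j|$), not the triangle inequality.
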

\begin{proof}
According to
the definition \ref{QHashing_definition}, we need to show two main properties for the function $\psi$:
\begin{enumerate}
    \item $\delta$-one-wayness. 
    The dimension of the input space is $q$, the quantum state space has dimension $d^m$. Thus, $\psi$ is $\delta$-one-way for
    $$\delta=\frac{d^m}{q}.$$
    \item $\varepsilon$-collision-resistance.
    The maximal inner product between unequal quantum hashes is bounded by
    \begin{eqnarray*}
    \fl \max\limits_{x_1\neq x_2}\big|\braket{\psi(x_1)}{\psi(x_2)}\big|=\max\limits_{x_1\neq x_2} \prod_{j=1}^{m}
    \left[\frac{1}{d}\Big|1+e^{i\frac{2\pi s_{j,2}(x_2-x_1)}{q}}+\ldots+e^{i\frac{2\pi s_{j,d}(x_2-x_1)}{q}}\Big|\right]
   \nonumber\\
   = \max\limits_{x\neq 0}\big|\braket{\psi(x)}{\psi(0)}\big|
     \nonumber\\ = \max\limits_{x\neq 0} \prod_{j=1}^{m}\left[\frac{1}{d}\Big|1+e^{i\frac{2\pi s_{j,2}x}{q}}+\ldots+ e^{i\frac{2\pi s_{j,d}x}{q}}\Big|\right]\leq \varepsilon^m,
     \end{eqnarray*}
    if all $S_j=\{s_{j,1}, \ldots, s_{j,d}\}$ are $\varepsilon$-biased sets.
\end{enumerate}
Thus, $\psi$ corresponds to the balanced
$\left(\frac{d^m}{q},\varepsilon^m\right)$-resistant quantum hash
function according to \cite{Ablayev-Ablayev-Vasiliev:2016:Balanced-Quantum-Hashing}.
%, since the probability of correct extraction of $x$ from $\ket{\psi(x)}$ is bounded by $\frac{d^m}{q}$, while the collision probability is bounded by $\varepsilon^m$.
\end{proof}
Note that the proof above also suggests that comparing hashes of two different values $x_1$ and $x_2$ is equivalent to comparing hashes of $x=(x_2-x_1)$ and $0$.

\begin{remark}
Although $\varepsilon$-biased sets give guaranteed collision resistance to our multiqudit quantum hash function, for small sizes of input and output spaces better bounds on collision resistance can be obtained by numeric optimization (see Table \ref{comparison-table} for details). 

At the moment this approach can be used only for relatively small values of $d,m$ and $q$ since it implies an exhaustive search for optimal values of $s_{j,k}$ that give minimum to the following function:
$$\min\limits_{\{s_{j,k}\}}\max\limits_{x\neq 0}\frac{1}{d^{m}} \prod_{j=1}^{m}\Big|1+e^{i\frac{2\pi s_{j,2}x}{q}}+\ldots+ e^{i\frac{2\pi s_{j,d}x}{q}}\Big|.$$

\end{remark}
%Note that the general lower bound suggests that for input state space of size $q$ collision probability bounded by $\varepsilon^m$ implies that a target quantum state space should have size at least $\Omega((\log{q})/\varepsilon^m$). On the other hand, there exist $\varepsilon$-biased sets of size $d=O((\log{q})/\varepsilon)$, and thus our construction gives the output state space of $d^m=O((\log{q})^m/\varepsilon^m)$.

\begin{table}
\caption{\label{comparison-table}The worst-case values of collision probability with parameters from $\varepsilon$-biased sets and from the numeric optimization, $q=256$.}
\begin{tabular}{@{}ccc}
\br
Number of qudits  &
$\varepsilon$-biased sets  &
Exhaustive search\\
\mr
&$d=2$  &\\
\mr
1 &
0,9998 &
0,9998\\

2 &
0,9996 &
0,959 \\
 
3 &
0,9994 &
0,7519 \\

4 &
0,9992 &
0,4378 \\

5 &
0,999 &
0,2031 \\

6  &
0,9988 &
0,0806 \\

7 &
0,9986 &
0,0279 \\
\mr
&$d=3$ &\\
\mr

1 &
0,9681 &
0,9681\\

2 &
0,9372 &
0,5422 \\
 
3 &
0,9073 &
0,1483 \\

4 &
0,8784 &
0,0368 \\

5 &
0,8504 &
0,0063 \\

\mr
&$d=4$&\\
\mr

1 &
0,8329 &
0,8329\\

2 &
0,6937 &
0,2174 \\
 
3 &
0,5778 &
0,0429 \\

4 &
0,4813 &
0,0072 \\

\br
\end{tabular}
\end{table}
\normalsize

\section{Experiment}
\subsection{Single-photon states with an orbital angular momentum}

%The structured light, especially light with an orbital angular momentum (OAM) is a topic of growing interest in the optics community, not only for its inherent properties but also for its possible applications \cite{rev_OAM_1,rev_OAM_2}. For example, the encoding in OAM basis allows us to increase the data capacity of single photon pulses due to the generation high-dimensional states. The achievable data bits per photon increases as $\log_2(N)$, where $N$ represents the number of orthogonal encoding states \cite{143km,Gibson:04,Willner:15,Mirhosseini_2015}. Furthermore, the AOM basis is widely used for space-division multiplexing due to having unique properties of inherent orthogonality and unbounded number of states, in principle. After coaxially propagating in the free space or fiber, beams with different OAM could be efficiently (de)multiplexed using special detection method \cite{multiplex_AOM}. The total system data capacity could be increased by a factor of $N$, where $N$ represents the number of multiplexed OAM beams \cite{terabit_scale_OAM,Pang:18,Terabit_free_space}. In the present paper, we construct a quantum hash as that consists of a sequence of single isolated qudits prepared into the OAM basis. We encoded classical input $x$ into phase of quantum states $\ket{\psi(x)}$. We show that the use of high-dimension states increases the collision resistance of hashing protocols and enhances the robustness against to extraction of information about the classical input $x$ of hashing protocols.
As we have shown above, the construction of the multiqudit quantum hash function (\ref{multiqudit-quantum-hashing}) is adapted for implementation by the sequence of high-dimensional states (qudits) with specific encoding. For that purposes we use single photons with an orbital angular momentum (OAM) generated via spontaneous parametric down-conversion (SPDC). OAM-based encoding is currently widely used for implementing various quantum communication protocols (see, e.g., the reviews \cite{Flamini_2018,willner2021orbital}) and is especially promising for their high-dimensional variants \cite{erhard2018twisted}. 

In the process of SPDC \cite{klyshko1977utilization,hong1986experimental}, when a pump beam propagates through a quadratic nonlinear medium, one of the pump photons spontaneously annihilates and signal and idler photons are simultaneously created. The signal and idler photons have to satisfy the phase matching conditions $\omega_p =\omega_s+\omega_s$ and $\vec{k_p}=\vec{k_s}+\vec{k_i}+\vec{k}_{QPM}$, where $\omega_j$ and $\vec{k_j}$ are the frequency and wave vector, respectively, corresponding to the pump ($j=p$), signal ($j=s$) and idler ($j=i$) photons, and $\vec{k}_{QPM}$ is a so called quasiphasemathing vector ($k_{QPM}=2\pi/\Lambda$) for a crystal poling with period $\Lambda$. If the pump radiation has an orbital angular momentum $l_p$ and all the photons propagate in the same direction (collinear SPDC), then the generated photon pairs also have OAM \cite{Nature_OAM_cons_law, Ibarra-Borja:19} satisfying to the conservation low
\begin{equation}
    l_p = l_s+l_i.
\end{equation}
In the present work, we created single-photon qudit states by projecting the angular momentum of the idler photons onto the mode with $l_i=0$. In this case, since $l_p=l_s$, the spatial structure of the signal photon reproduces that of the pump field. 
%To generate beams with OAM we realized this approach by preparing and measuring OAM carrying beams in the basis of Laguerre-Gaussian $(LG_p^l)$ modes with $p = 0$.
\begin{figure}
    \centering
    \includegraphics[width=9cm]{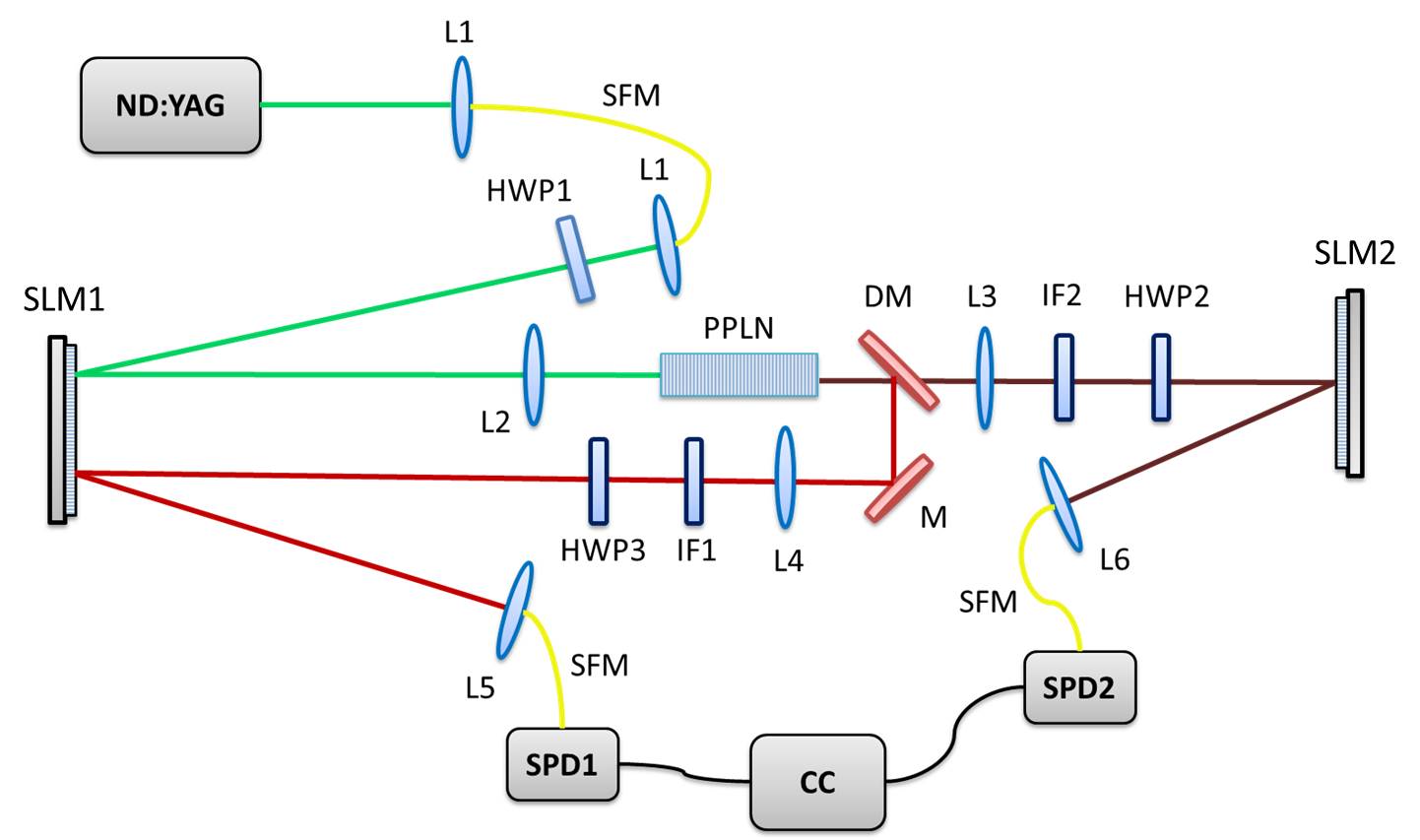}
    \caption{The experimental setup for single-photon qudit generation.}
    \label{fig:experemental setupl}
\end{figure}
To construct the multiqudit version of the quantum hash function we exploit single-photon qudits in a superposition of Laguerre-Gaussian $({\rm{LG}}_p^l)$ modes with the radial index $p = 0$. 
%$d$ modes $\textrm{LG}_0^{\ell_j}$, $j=1, \ldots, d$. 
In particular, we focus on the equally weighted superpositions that have the following form:
\begin{equation}\label{eq:OAM-states)}
\ket{\psi}=\frac{1}{\sqrt{d}}(\ket{\ell_1}+e^{i\varphi_2}\ket{\ell_2}+ \ldots +e^{i\varphi_d}\ket{\ell_d}),
\end{equation}
where $\ket{\ell_n}$ denotes a single-photon state corresponding to $\textrm{LG}_0^{\ell_n}$ mode, $n=1\ldots d$, where $d$ is the dimension of the qudit state space, and $\varphi_n$ is a relative phase.

Our experimental setup is schematically shown in figure 1. We use a 2 cm long type-0 periodically poled LiNbO$_3$:MgO 5$\%$ (PPLN) crystal with a period of 7.50 $\mu$m, which is designed to generate signal photons at wavelength of 810 nm and idler photons at 1550 nm from a pump field at wavelength of 532 nm (CW ND-YAG) under 75\textdegree C operation temperature. The pump field is spatially filtered by a single-mode fiber (SMF) interfaced by two lenses (L1). To prepare required spatial states of the pump field we take advantage of a phase holography technique developed in \cite{Bolduc:13}, where the required modes are obtained after beam reflection from SLM’s screen and picked out the first diffraction order. The first SLM1 (Holoeye PLUTO-2) convert the Gaussian pump beam into an LG mode (part 1 of SLM1) and detect signal photons (part 2 of SLM1) using the phase-flattening technique \cite{Nature_OAM_cons_law}. %The second SLM2 (Holoeye PLUTO-2) detect only LG mode with azimuthal number $\ell = 0$ (???).%
The half-wave plates (HWP$i$, where $i$ = 1, 2, 3) was used to optimize the field polarization with respect to the SLMs. After the preparation of the required LG mode, the pump beam is sent through the PPLN crystal using the biconvex lens L2 with the focal length of 17.5 cm. The signal and idler photons, generated at the wavelength of 810 nm and 1550 nm, respectively, are separated by a dichroic mirror (DM). In the idler arm, the photons are collimated using a lens L3 with the focal length of 150 mm. After that, the idler photons are sent to SLM2, which maintains only $\ell = 0$ mode, and are coupled by the aspheric lens L6 with the focal length of 11 mm into a single mode fiber SMF. As a result, the detected photons have only zero OAM ($\ell_i = 0$). In the signal arm, the photons are collimated using the lens L4 with the focal length of 150 mm and are sent to the part 2 of SLM1. Having been transformed they are coupled by the aspheric lens L5 with the focal length of 11 mm into a SMF. Interference filters IF1 and IF2 are used to select photons at signal and idler wavelengths, respectively. 

The focal length of L2 was chosen to approach the single-Schmidt mode regime of SPDC \cite{Egor_single_Schmidt}. In the single-Schmidt mode regime, the Rayleigh range of the pump beam $z_r = (\pi w^2)/\lambda$ should be equal to half of the crystal length $L$. Following this criteria, we estimated the required pump beam waist to be $w_p = 29\; \mu\textrm{m}$. For the optimal detection of the down-converted modes, the lenses L3 and L4 were chosen so that the Rayleigh range of the signal and idler beams be equal to half of the crystal length $L$, which requires $w_s=36\;\mu\textrm{m}$ and $w_i = 50\;\mu\textrm{m}$. In the experiment we had $w_p = 32\pm 0.7\; \mu\textrm{m}$, $w_s = 40.3\pm 1.2\; \mu\textrm{m}$, and $w_i = 50.1\pm 0.51 \;\mu\textrm{m}$.

The photon detection was carried out using photodetectors SPD1 (SPCM AQR-14, PerkinElmer) operating in the free running mode with an efficiency of 45$\%$, dark count rate of the order of 2 kHz and dead time of 150 ns, and SPD2 (ID Quantique 210) operating in the free running mode with an efficiency of 10\%, dark count rate of the order of 15 kHz, and dead time of 16 $\mu$s. The signals from both detectors are analyzed by the time-to-digital converter (CC, Time Tagger 20). 

To explore quantum hashing main properties we prepared quantum states with the dimension $d = 2$, $3$, and $4$ in the basis of OAM modes $\ell = -3$, $-2$, $\ldots$, $3$. The examples of these states are
\begin{equation}
    \ket{\psi_j^{d=2}(\phi_1)} = \frac{1}{\sqrt{2}}\left(\ket{2} + e^{i\phi_1}\ket{-2}\right),
\end{equation}
\begin{equation}
\ket{\psi_j^{d=3}(\phi_2,\phi_3)} = \frac{1}{\sqrt{3}}\left(\ket{2} + e^{i\phi_2}\ket{-2}+e^{i\phi_3}\ket{0}\right),
\end{equation}
\begin{eqnarray}
 \ket{\psi_j^{d=4}(\phi_4,\phi_5,\phi_5)} = \frac{1}{\sqrt{4}}(\ket{3}+e^{i\phi_4}\ket{-3}+\e^{i\phi_5}\ket{1}  +e^{i\phi_6}\ket{-1}).
\end{eqnarray}

To determine states quality we take advantage of the quantum tomography approach developed in \cite{tomography_1,tomography_2}. As an example, we present here quantum tomographic measurements for the qutrit state $\ket{\psi_j^{d=3}(0,0)}$, which illustrate the accuracy of our experiment. The reconstructed density matrix $\rho_{\ket{\psi_j^{d=3}(0,0)}}^{exp}=\rho_{real}^{exp}+\rho_{imag}^{exp}$ is
\begin{equation}
\fl
    \rho_{real}^{exp}=\left(
    \begin{array}{ccc}
         0.332\pm0.002 & 0.328\pm0.005 & 0.33\pm0.005\\
         0.328\pm0.005 & 0.344\pm0.01 & 0.332\pm0.002\\
         0.33\pm0.005 & 0.332\pm0.002 & 0.344\pm0.009
    \end{array}
    \right),
\end{equation}
\begin{equation}
\fl
    \rho_{imag}^{exp}
    \left(
    \begin{array}{ccc}
        0 & (0.002\pm0.003)i & (0.002\pm0.003)i\\
        (-0.002\mp0.003)i & 0 & -(0.002\pm0.001)i\\
        (-0.002\mp0.003)i & (0.002\pm0.001)i & 0
    \end{array}
    \right).
\end{equation}
The main figure of merit is the fidelity, which is a measure of how close the reconstructed state is to a target state and is given by $F=\left[Tr(\sqrt{ \sqrt{\rho_{target}}\rho^{exp}\sqrt{\rho_{target}} })\right]^2$, where $\rho_{target}$ and $\rho^{exp}$ are the target and reconstructed density matrices, respectively \cite{Fidelety}. We have found that $F_{d=3} = 0.987\pm0.012$. In the same way, we estimated that fidelity for the qubit state $\ket{\psi_j^{d=2}(\frac{2\pi}{3})}$ is $F_{d=2}=0.99\pm0.01$. To determine the purity of these state we calculated the eigenvalues of the density matrices. The largest eigenvalues were $\lambda_{d=2}=0.999$ and $\lambda_{d=3}=0.993$ that correspond to high purity states. As a result, based on the experimental data we can conclude that quantum states of different dimensions are prepared with high purity and high accuracy level. 

\subsection{Implementation of the multiqudit quantum hashing}
In the previous work \cite{Our_qubits_hash}, we have suggested quantum hash functions as a sequence of independent qubits, where classical information was encoded into qubits phase. Now we propose quantum hashing protocol, where the information carriers are high-dimensional states with an orbital angular momentum. In this case, the structure of the quantum hash can be represented by Eqs. \ref{single-qudit-quantum-hashing}, and \ref{multiqudit-quantum-hashing}.
%as
% \begin{equation}
% \ket{\psi_j(x)} = \frac{1}{\sqrt{d}}\left(\ket{\ell_1} + e^{i(2\pi s_{j,2}x/q)}\ket{\ell_2}+\ldots+e^{i(2\pi s_{j,d}x/q)}\ket{\ell_d}\right),
% \end{equation}

Here we propose the implementation of the verification procedure that for a given quantum hash $\ket{\psi(x_1)}$ and a classical value $x_2$ checks whether $x_1=x_2$ or not. The ideal quantum experiment that verifies a multiqubit quantum hash can be set as follows.
\begin{enumerate}
    \item 
We receive a quantum hash of some generally unknown
value $x_1$ as a sequence of $m$ single photons in the overall state $\ket{\psi(x_1)}$:
\begin{equation*}
    \ket{\psi(x_1)}=\ket{\psi_1(x_1)}\otimes\ldots\otimes\ket{\psi_m(x_1)},
\end{equation*}
where the $j$-th qudit is expected to be in the state $\ket{\psi_j(x_1)}$ as described by equation (\ref{single-qudit-quantum-hashing}). 

\item Then we check whether $x_1$ equals to some predefined $x_2$ or not. To do this we perform measurements that project $\ket{\psi_j(x_1)}$ onto $\ket{\psi_j(x_2)}$ and $d-1$ phase orthogonal states
\begin{eqnarray}
\fl
\ket{\psi^{\perp,g}_j(x_2)} = \frac{1}{\sqrt{d}}(\ket{\ell_1} + e^{i\frac{2\pi s_{j,2}x_2}{q}+i\phi_{g,2}}\ket{\ell_2}+\ldots
+e^{i\frac{2\pi s_{j,d}x_2}{q}+i\phi_{g,d})}\ket{\ell_d}),
\end{eqnarray}
where $\big|\braket{\psi_j(x_2)}{\psi^{\perp,g}_j(x_2)}\big|^2=0$, $g = 1, \ldots,d-1$, and $\phi_{g,d}$ are additional phases responsible for orthogonality of the states. For example, if we use the qutrits (3-dimensional states) as information carriers, the set of {$\{\phi_{g,d}\}$} is equal to \{\{$2\pi/3,-2\pi/3$\},\{$-2\pi/3,2\pi/3$\}\}.
\item The projection measurements of these states may be sequential or parallel. In the latter case we have to prepare a complex phase mask on the part 2 of SLM1 which is an appropriate superposition of detection masks for  $\ket{\psi_j(x_2)}$ and $d-1$ states $\ket{\psi^{\perp,g}_j(x_2)}$. The complex mask directs the photons into $d$ detection channels corresponding to the states $\ket{\psi_j(x_2)}, \ket{\psi^{\perp,1}_j(x_2)}, \ldots, \ket{\psi^{\perp,d-1}_j(x_2)}$, respectively. The single-photon detector click in $\ket{\psi_j(x_2)}$ or $\ket{\psi^{\perp,g}_j(x_2)}$ channels corresponds to the outcome $\ket{\psi_j(x_1)}=\ket{\psi_j(x_2)}$ or $\ket{\psi_j(x_1)}\ne \ket{\psi_j(x_2)}$, respectively.

\item If $x_1 = x_2$, the detector of the output $\ket{\psi_j(x_2)}$ would always click, while the other detectors would never click.

\item If $x_1 \ne x_2$, each of the detectors might click, but the probability of erroneous outcome “$x_1 = x_2$” is bounded by the construction of the quantum hash function $\ket{\psi(x)}$.

\item If none of the detectors had clicked, then the qudit is lost, and we either request its resending or tolerate the higher error probability.
 
\item If all of $m$ measurements end up with the outcome "$\ket{\psi_j(x_1)}=\ket{\psi_j(x_2)}$", then the final result of the experiment is considered to be “$x_1 = x_2$”. Otherwise, if at least one qudit leads to $\ket{\psi_j(x_1)}\ne\ket{\psi_j(x_2)}$, then the overall result is also "$x_1 \ne x_2$".

\end{enumerate}

The error probability comes from the fidelity between two different quantum hashes, which is
\begin{equation}\label{eq3}
    \big|\braket{\psi(x_1)}{\psi(x_2)}\big|^2 = \frac{1}{d^{2m}} \prod_{j=1}^{m}\Big|1+e^{i\frac{2\pi s_{j,2}(x_1-x_2)}{q}}+\ldots+ e^{i\frac{2\pi s_{j,d}(x_1-x_2)}{q}}\Big|^2.
\end{equation}
The parameter set $\{s_{j,k}\}$ %is composed of several $\varepsilon$-biased sets $S_j=\{s_{j,1}, \ldots, s_{j,d}\}$ so that $|\braket{\psi(x_1)}{\psi(x_2)}|^2$ is bounded by $\varepsilon^{2m}$. 
is chosen in such a way that the pairs of hashes $\ket{\psi(x_1)}$ and $\ket{\psi(x_2)}$ give the minimal fidelity for $x_1\neq x_2$. 
To measure the collision probability we compare different quantum hashes in the worst-case scenario when quantum hashes for $x_1$ and $x_2$ have the maximum fidelity for a given set $\{s_{j,k}\}$.

The protocol starts with a calibration step on which we adjust the coincidence count rate for the “yes”-answer (“$x_1=x_2$”). We perform about 200 projection measurements of equal states 
%$\big|\braket{\psi(0)}{\psi(0)}\big|^2$ 
and calculate the average coincidence count rate between signal and idler photons. The simultaneous clicks in the idler and signal detectors means that the idler photon has been prepared in the state $\ket{\psi(x_1)}$ and has been successfully projected on the state $\ket{\psi(x_2)}$. 
%We iterate this step about 10 times and pick 
%the average %minimal %!!!
We pick the average 
value as the threshold between “yes” and “no”. In this work, we experimentally evaluate collision probability for multiqudit quantum hash function with $q = 256$ and three groups of parameters: i) $d=2$ and $m=1,\ldots,7$, ii) $d=3$ and $m=1,\ldots,5$, and iii) $d=4$ and $m=1,\ldots,4$, i.e., we perform experiments with various number of qubits, qutrits and ququarts. We encode the classical message $x_1$ into the phase of states considered in Section \ref{quantum-hash-functions}. Since we are considering the worst-case situation, without loss of generality we can focus
%To simplify theoretical and experimental investigations we focused 
on the case when $x_2 = 0$. The value of $x_1$ corresponding to the worst-case scenario was 
%defined from equation (\ref{eq3}) as
calculated from
\begin{equation}
    x_1 = \mathrm{arg}\max\limits_{x\neq0}{\frac{1}{d^{2m}}\prod_{j=1}^{m}\Big|1+e^{i\frac{2\pi s_{j,2}x}{q}}+\ldots+ e^{i\frac{2\pi s_{j,d}x}{q}}\Big|^2}
\end{equation}
for an optimal (quasioptimal) set of parameters $\left\{s_{j,k}\right\}$, which in turn was precomputed as
\begin{equation}
        \left\{s_{j,k}\right\} = \mathrm{arg}\min\limits_{\left\{s_{j,k}\right\}}\max\limits_{x\neq 0}{\frac{1}{d^{2m}}\prod_{j=1}^{m}\Big|1
        +e^{i\frac{2\pi s_{j,2}x}{q}}+\ldots+ e^{i\frac{2\pi s_{j,d}x}{q}}\Big|^2}.
\end{equation}
\begin{figure}
    \centering
    \includegraphics[width=6cm]{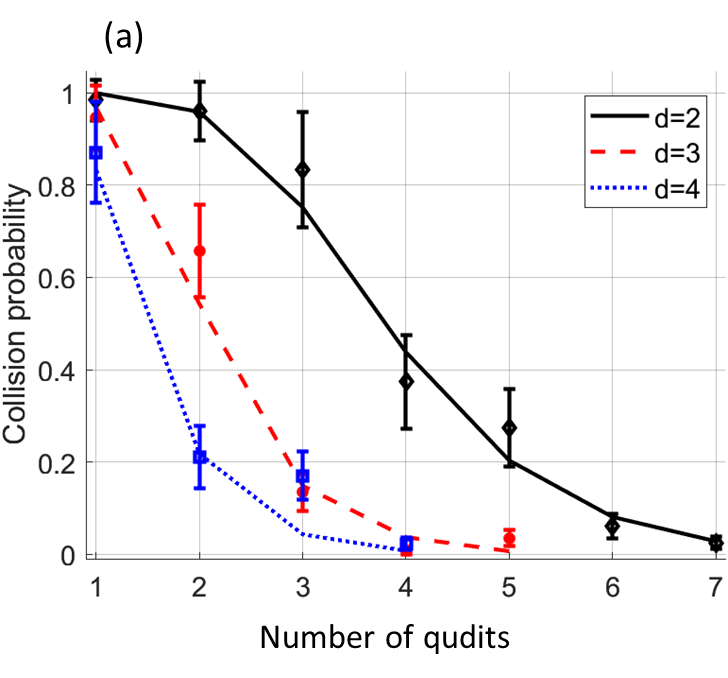} \includegraphics[width=6cm]{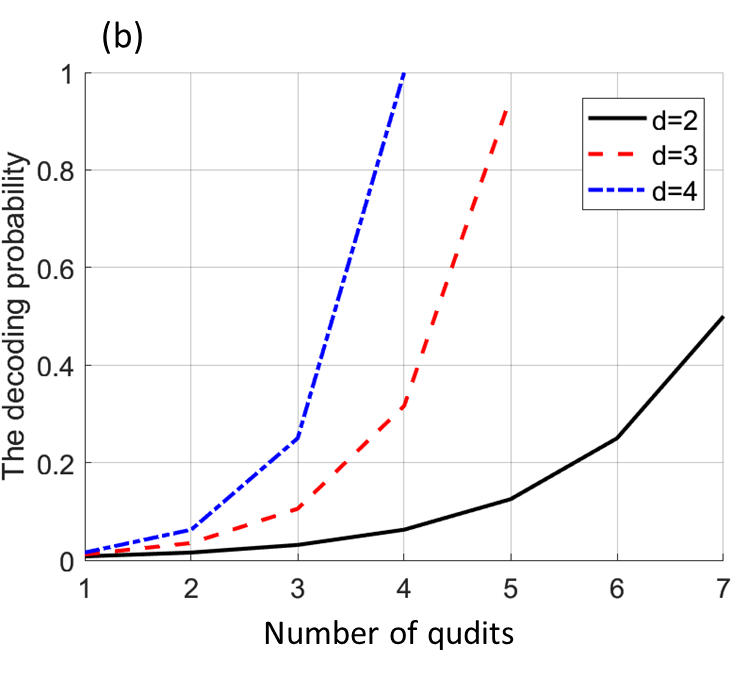}
    \caption{a) Comparing experimental and theoretical collision probabilities for the worst-case scenario for different dimensions of quantum states. b) The theoretical probability of extracting initial message from its quantum hash.}
    \label{fig: Collisions bounds}
\end{figure}

Figure \ref{fig: Collisions bounds} shows the comparison of experimental and theoretical error rates for the worst-case scenario depending on different dimension of quantum states. The experimental results suggest that the proposed technique can be useful even for small sizes of input and output states. Moreover, it can be seen that the number of information carriers decreases with increase of their quantum state dimension for an optimal relation between collision probability and the decoding probability (the probability of extracting the classical input $x$ from the quantum hash). For instance, if we limit the collision probability by 0.25 and the decoding probability by 0.15, the optimal number of qudits to ``compress'' a 8-bit classical information proves to be $m=5$ for $d=2$, $m=3$ for $d=3$, and $m=2$ for $d=4$. Moreover, according to the Holevo theorem \cite{Holevo:1973:bound} no more information can be extracted from a quantum $d$-level system than from a classical $d$-level system, which means that we have a bounded probability (below 1) to extract the information about classical input $x$. 
%For a totally classical system the decoding probability doesn't have the limit, especially if the size of the input space is small.

\section{Conclusion}
In this paper, we have developed a high-dimensional quantum hashing protocol and presented its proof-of-principle implementation using orbital-angular momentum encoding of single photons. Experimental results agree quite well with the
%have confirmed 
theoretical estimations of collision probability dependence on the number of qudits in a quantum hash. 
%Experimental results also validate the technique of high-dimensional states creation and measurement using various dimensions and OAM bases.
An important result is that the number of information  carriers decreases with increase of their quantum state dimension for an optimal ratio between collision probability and the decoding probability. 

Thus, the developed multiqudit quantum hashing approach can speed up 
%both the preparing and checking hash states thereby 
%making them more useful in quantum communications.
quantum communications by reducing the number of particles being transferred while having balanced resistance to inversion and quantum collisions.

%практическое использование в будущемhttps://www.overleaf.com/project/6130d2d08443a048ec3ce88b

\ack
The experimental part of the work was made within the government assignment for FRC Kazan Scientific Center of RAS. The quantum hashing construction and analysis has been supported by the Kazan Federal University Strategic Academic Leadership Program (``PRIORITY-2030'').
The authors express gratitude to \href{https://orcid.org/0000-0003-2943-6337}{Ehsan Shaghaei} for valuable advice and programming support.
%The research is supported by ...%the Russian Science Foundation, project No. 19-19-00656.

\section*{References}

\bibliography{references, references_OAM}

\end{document}